\newtheorem{lemma}{Lemma}
\newtheorem{statement}{Statement}
\newcommand{\rhoerr}{\rho_{err}}
\newcommand{\rhoid}{\rho_{id}}
\newcommand{\rhown}{\rho_{wn}}
\begin{document}
\title{Can shallow quantum circuits scramble local noise into global white noise?}
\newcommand{\qmaddress}{Quantum Motion, 9 Sterling Way, London N7 9HJ, United Kingdom}
\author{Jonathan Foldager}
 \email[]{jonf@dtu.dk}
 \affiliation{Department of Applied Mathematics and Computer Science, Technical University of Denmark, Denmark} \affiliation{Department of Materials, University of Oxford, Parks Road, Oxford OX1 3PH, United Kingdom}
\author{Bálint Koczor}
\email[]{balint.koczor@materials.ox.ac.uk}
\affiliation{Department of Materials, University of Oxford, Parks Road, Oxford OX1 3PH, United Kingdom}
\affiliation{\qmaddress}

\begin{abstract}
Shallow quantum circuits are believed to be the most promising candidates for achieving early practical quantum advantage -- this has motivated the development of a broad range of error mitigation techniques whose performance generally improves when the quantum state is well approximated by a global depolarising (white) noise model. While it has been crucial for demonstrating quantum supremacy that random circuits scramble local noise into global white noise---a property that has been proved rigorously---we investigate to what degree practical shallow quantum circuits scramble local noise into global white noise. We define two key metrics as (a) density matrix eigenvalue uniformity and (b) commutator norm.  While the former determines the distance from white noise, the latter determines the performance of purification based error mitigation. We derive analytical approximate bounds on their scaling and find in most cases they nicely match numerical results. On the other hand, we simulate a broad class of practical quantum circuits and find that white noise is in certain cases a bad approximation posing significant limitations on the performance of some of the simpler error mitigation schemes. On a positive note, we find in all cases that the commutator norm is sufficiently small guaranteeing a very good performance of purification-based error mitigation. Lastly, we identify techniques that may decrease both metrics, such as increasing the dimensionality of the dynamical Lie algebra by gate insertions or randomised compiling.
\end{abstract}

\maketitle

\section{Introduction \label{sec:introduction}}
Current generations of quantum hardware can already significantly outperform classical computers
in random sampling tasks~\cite{arute2019quantum,tillmann2013experimental} and hopefully future hardware developments will
enable powerful applications in quantum machine learning~\cite{biamonte2017quantum}, fundamental physics~\cite{jafferis2022traversable,latticeschwinger}
and in developing novel drugs and materials~\cite{cao2019quantum, mcardle2020quantum,bauer2020quantum, Motta2022}.
The scale and precision of the technology today is, however, still below what is required for fully fault-tolerant quantum computation:
Due to noise accumulation in the noisy intermediate-scale quantum (NISQ) era~\cite{preskill2018quantum}, one is thus limited to only
shallow-depth quantum circuits which led to the development of a broad range of hybrid quantum-classical protocols
and quantum machine learning algorithms~\cite{cerezo2021variational,endo2021hybrid,bharti2021noisy}.

The aim in this paradigm is to prevent excessive error buildup via a parameterised, shallow-depth quantum circuit and then perform a series of repeated
measurements in order to extract expected values. These expected values are then post processed on a classical computer in order
to update the parameters of the circuits, e.g., as part of a training procedure.
A major challenge is the potential need for an excessive number of circuit
repetitions which, however, can be significantly suppressed by the use of advanced training algorithms~\cite{cerezo2022challenges,PRXQuantum.2.030324,PhysRevResearch.4.023017}
or via classical-shadows-based protocols~\cite{huang2020predicting,PhysRevX.12.041022,chan2022algorithmic}.
As such, the primary limitation of near-term applications is the damaging effect of gate noise 
on the estimated expected values which can only be reduced by advanced
error mitigation techniques~\cite{cai2022quantum,endo2021hybrid}.

Error mitigation comprises a broad collection of diverse techniques that generally aim to estimate precise expected values by suppressing the
effect of hardware imperfections~\cite{cai2022quantum,endo2021hybrid}. Due to the diversity of techniques and due to the
significant differences in the range of applicability, the need for performance metrics was recently emphasised~\cite{cai2022quantum}.
This motivates the present work to characterise noise in typical practical circuits, e.g., in quantum simulation or in quantum machine learning,
and define two key metrics that determine the performance of a broad class of error mitigation techniques:
(a) eigenvalue uniformity as a closeness to global depolarising (white) noise and (b) norm of the commutator between the ideal and noisy quantum states.
While (b) determines the performance of purification based error mitigation techniques~\cite{PhysRevX.11.031057,PhysRevX.11.041036},
(a) implies a good performance of all error mitigation techniques.

Our primary motivation is that gate errors in complex quantum circuits are scrambled into global white noise~\cite{arute2019quantum,dalzell2021random}.
This property has been proved for random circuits by establishing exponentially decreasing error bounds;
surprisingly, in our numerical simulations we find that in many practical scenarios
the same bounds apply relatively well. In particular, we find that both our metrics, (a) the distance from global-depolarising noise
and (b) the commutator norm, are approximated as
\begin{equation}\label{eq:nu_alpha}
	f(\nu) = \alpha \frac{e^{-\xi} \xi  }{(1-e^{-\xi})  \sqrt \nu  } = \frac{\alpha}{\sqrt \nu } + O(\xi),
\end{equation}
where $\nu$ is the number of gates in the quantum circuit, $\xi$ is the number of expected errors in the
entire circuit and $\alpha$ is a constant.
As such, if one keeps the error rate small $\xi \ll 1$ but increases the number of gates in a circuit
then both (a) and (b) are expected to decrease.
This is a highly desirable property in practice, e.g., 
white noise does not introduce a bias to the expected-value measurement but only a trivial, global
scaling as we detail in the rest of this introduction.

In the present work we simulate a broad range of quantum circuits often used in practice
and identify scenarios where this approximation holds well, by means of gate parameters and circuit structures are sufficiently random.
We also identify strategies that improve scrambling local gate noise
into global white noise, such as inserting additional gates into a circuit
to increase the dimensionality of its Lie algebra~\cite{larocca2022diagnosing}.
In most cases, however, we conclude that
white noise is not necessarily a good approximation due to the large prefactor $\alpha$ in \cref{eq:nu_alpha}. Thus the performance of some error mitigation techniques that rely on a global-depolarising noise assumption is limited. 
On the other hand, we find that in all cases the commutator norm, our other key metric, is smaller
by at least 1-2 orders of magnitude guaranteeing a very good performance of purification-based error mitigation techniques.

Our work is structured as follows. In the rest of this introduction we briefly review
global depolarising noise and how it can be exploited in error mitigation, and then 
briefly review purification-based error mitigation techniques and their performance. 
In \cref{sec:backg} we introduce theoretical notions and finally in \cref{sec:numerics}
we present our simulation results.

\subsection{Global depolarisation and error mitigation\label{sec:intro_glob_dep}}
In the NISQ-era, we don't have comprehensive solutions to error correction, which has led the field to develop error mitigation techniques. These techniques aim to extract expected values $\langle O \rangle_{ideal} := \tr[O \rhoid]$
of observables---typically some Hamiltonian of interest---with respect to an ideal noiseless quantum state $\rhoid$.

A very simple error model, the global depolarising noise channel,
has been very often considered as a relatively good approximation to complex quantum circuits. For qubit states, the channel mixes the ideal, noise-free state with the maximally mixed state 
$\mathrm{Id}/d$ of dimension $d=2^N$ as
\begin{equation}\label{eq:rhown}
	\rhown := \eta \rhoid + (1-\eta) \mathrm{Id}/d.
\end{equation}
Here $\eta \approx F$ is a probability that approximates the fidelity
as $F = \eta + (1-\eta)/d$.
The white noise channel has been commonly used in the literature for modelling errors in near-term quantum computers~\cite{PhysRevE.104.035309}
and, in particular,
it has been shown to be a very good approximation to noise in random circuits~\cite{arute2019quantum,dalzell2021random}.
White noise is extremely convenient as it lets the user extract, after rescaling by $\eta$, the ideal expected value
of any traceless Hermitian observable $O$ via
\begin{equation}	\label{eq:expected_values}
	 \langle O \rangle_{ideal} = \tr[O \rhown] / \eta.
\end{equation}
Of course, for small fidelities $\eta \ll 1$
the expected value $\tr[O \rhown]$ requires a significantly increased
sampling to suppress shot noise.
In this model, the scaling factor $\eta$ is a global property and can be estimated experimentally,
e.g., via randomised measurements~\cite{PhysRevE.104.035309}, via extrapolation~\cite{endo2018practical}
or via learning-based techniques \cite{PRXQuantum.2.040330}.

Global depolarisation, however, may not be sufficiently accurate model to
capture more subtle effects of gate noise
and thus rescaling an experimentally estimated expected value yields 
a biased estimate of the ideal one
as $\langle O \rangle_{bias} := \tr[O \rho]/\eta - \langle O \rangle_{ideal} $.
The bias here $\langle O \rangle_{bias}$ is not a global
property, i.e., it is specific to each observable, and requires the use of
more advanced error mitigation techniques to suppress.

Intuitively, one expects the bias is small for quantum
states that are well approximated by 
a global depolarising model as $\rho \approx \rhown$ and, indeed, we find a general upper bound
in terms of the trace distance as
\begin{equation}\label{eq:trace_dist_bound}
	|\langle O \rangle_{bias}|
	=
	\frac{ | \tr[O \rho] - \tr[O \rho_{wn}] | }{\eta}
	\leq \frac{ \lVert O \rVert_\infty  \lVert \rho - \rho_{wn} \rVert_1  }{\eta}.
\end{equation}
Here $\lVert O \rVert_\infty$ is the operator norm as the absolute largest eigenvalue of the traceless $O$, 
refer to ref.~\cite{koczor2021dominant} for a proof. As such, a small trace distance guarantees a small
bias and thus indirectly determines the performance of all error mitigation techniques --  and further
protocols~\cite{PhysRevA.106.062416,chan2022algorithmic}.

In this work, we characterise how close noisy quantum states $\rho$ in practical applications approach
white noise states $\rho_{wn}$ and consider various types of variational quantum circuits that are typical
for NISQ applications.
When the above trace distance is small then it guarantees a small bias in expected values
which allows us to nearly trivially mitigate the effect of gate noise, i.e., via a simple 
rescaling.

\subsection{Purification-based error mitigation and the commutator norm}
Another core metric we will consider is the commutator norm between the ideal and noisy quantum states as
$\mathcal{E}_C := \lVert [ \rhoid, \rho ] \rVert_1$, which determines the performance
of purification based error mitigation techniques~\cite{koczor2021dominant} -- a small commutator norm has significant practical implications as it guarantees that one can accurately determine expected values
using the ESD/VD~\cite{PhysRevX.11.031057,PhysRevX.11.041036} error mitigation techniques. In particular, independently preparing $n$ copies of the
noisy quantum state and applying a derangement circuit to entangle the copies, allows one to estimate the
expected value
\begin{equation*}
 \frac{\tr[\rho^n O]}{\tr[\rho^n]} =	\langle O \rangle_{ideal} 	+ \mathcal{E}_{ESD}.
\end{equation*}
The approach is very NISQ-friendly~\cite{o2022purification,PhysRevApplied.18.044064}
and its approximation error $\mathcal{E}_{ESD}$ approaches in exponential
order a noise floor as we increase the number of copies $n$~\cite{PhysRevX.11.031057};
This noise floor is determined generally by the commutator norm $\mathcal{E}_C$ but in the most typical
applications of preparing eigenstates, the noise floor is quadratically smaller as $\mathcal{E}_C^2$~\cite{koczor2021dominant}.

Note that this commutator can vanish even if the quantum state is very far from a white noise state,
in fact it generally vanishes when $\rhoid$ approximates an eigenvector of $\rho$.
When a state is close to the white noise approximation then a small commutator norm is guaranteed, 
however, we demonstrate that the latter is a
much less stringent condition and a much better approximation in practice than the former:
in all instances we find that the commutator norm is significantly smaller than the trace distance from white noise states.

\section{Theoretical Background\label{sec:backg}}

In this section we introduce the main theoretical notations and
recapitulate the most relevant results from the literature.

\subsection{General properties of noisy quantum states\label{sec:backg_noisy_states}}

Recall that any quantum state of dimension $d$ can be represented via its density matrix $\rho$ that
generally admits the spectral decomposition as
\begin{equation}\label{eq:spectral_decomp}
	\rho = \sum_{k=1}^{d} \lambda_k \ketbra{\psi_k},
\end{equation}
where we focus on $N$-qubit systems of dimension $d=2^N$.
Here $\lambda_k$ are non-negative eigenvalues and $\ket{\psi_k}$ are eigenvectors.
Since $\sum_i \lambda_i = 1$, the spectrum $\underline{\lambda}$ is also interpreted as a probability distribution.

If $\rho$ is prepared by a perfect, noise-free unitary circuit, 
only one eigenvalue is different from zero and the corresponding eigenvector is the ideal quantum state as $\ket{\psi_{id}}$.
In contrast, an imperfect circuit
prepares a $\rho$ that has more than one non-zero eigenvalues and is thus a probabilistic mixture
 of the pure quantum states $\ket{\psi_k}$, e.g., due to interactions with a surrounding environment.
In fact, noisy quantum circuits that we typically encounter in practice produce quite
particular structure of the eigenvalue distribution: one dominant component that approximates the ideal
quantum state $\ket{\psi_1} \approx \ket{\psi_{id}}$ mixed with an exponentially growing number of
``error'' eigenvectors that have small eigenvalues. White noise is the limiting case
where non-dominant eigenvalues are exponentially small $\propto 1/d$ and $\ket{\psi_1} \approx \ket{\psi_{id}}$.

The quality of the noisy quantum state is then defined by the probability of the ideal quantum state
as the fidelity $F:= \expval{\rho}{\psi_{id}}$; We show in \cref{app:proof_eigval_appr}
that for any quantum state it approaches the dominant eigenvalue $\lambda_1$ as
\begin{equation}\label{eq:eigval_appr}
	\lambda_1 = F + O(\mathcal{E}_C),
\end{equation}
where we compute the error term analytically 
in terms of the commutator norm $\mathcal{E}_C = \lVert [ \rhoid, \rho ] \rVert_1$ from \cref{sec:intro_glob_dep}. This property is actually completely general and applies to any density matrix.

\subsection{Practically motivated noise models\label{sec:pract_quantum_sates}}
Most typical noise models used in practice, such as local depolarising or dephasing noise,
admit the following probabilistic interpretation: 
a noisy gate operation $\Phi(\rho)$ can be interpreted as a mixture of the
noise-free operation $U$ that happens with probability $1-\epsilon$
and an error contribution as
\begin{equation}\label{eq:noise_model}
	\Phi_k(\rho) = (1-\epsilon) U_k \rho U_k^\dagger + \epsilon \Phi_{err} (U_k \rho U_k^\dagger).
\end{equation}
Here $U_k$ is the $k^{th}$ ideal quantum gate and
the completely positive trace-preserving (CPTP) map $\Phi_{err}$ happens with probability $\epsilon$ and accounts for all
error events during the execution of a gate.
A quantum circuit is then a composition of a series of $\nu$ such quantum gates
which prepares the convex combination as
\begin{equation} \label{eq:decomposition}
	\rho = \eta \rhoid + (1-\eta) \rhoerr.
\end{equation}
Here $\rhoid := \ketbra{\psi_{id}}$ is the ideal noise-free state, $\rhoerr$ is an error density matrix
and $\eta = (1-\epsilon)^\nu$ is the probability that none of the gates have undergone errors. 
This probability actually~\cite{koczor2021dominant,dalzell2021random}
approximates the fidelity $F:= \expval{\rho}{\psi_{id}}$ given the noise model in \cref{eq:noise_model} as
\begin{equation}\label{eq:fid_approx}
	F =  (1-\epsilon)^\nu  + \mathcal{E}_{F} =  e^{-\xi} + \mathcal{E}_{F} + O(\epsilon^2 /\nu).
\end{equation}
Here we approximate $(1-\xi/\nu)^\nu = e^{-\xi} + O(\epsilon^2 /\nu)$
for small $\epsilon$ and large $\nu$ where $\xi := \epsilon \nu$ is the circuit error rate 
as the expected number of errors in a circuit. 
In practice the approximation error $\mathcal{E}_{F}=\expval{\rhoerr}{\psi_{id}}$
is typically small and in the limiting case of white noise it decreases exponentially as
$\mathcal{E}_{F} = 1/d$ due to $\rhoerr = \mathrm{Id}/d$.

Assuming sufficiently deep, complex circuits, ref.~\cite{koczor2021dominant} obtained an approximate
bound for the
commutator between the ideal and noisy quantum states as
\begin{equation} \label{eq:commutator_norm}
	\lVert [ \rhoid, \rho ] \rVert_1 	 \lessapprox \mathrm{const} \times e^{ -\xi  } \xi /\sqrt{\nu}.
\end{equation}
This bound confirms that as we increase the number of quantum gates $\nu$ in a circuit
but keeping the circuit error rate $\xi$ constant,
the commutator norm decreases as $\propto 1/\sqrt{\nu}$~\cite{koczor2021dominant}.
Furthermore, this function
closely resembles to \cref{eq:nu_alpha} which is a central aim of this work to explore.

\begin{figure*}[tb]
	\centering
	\includegraphics[width=0.8\textwidth]{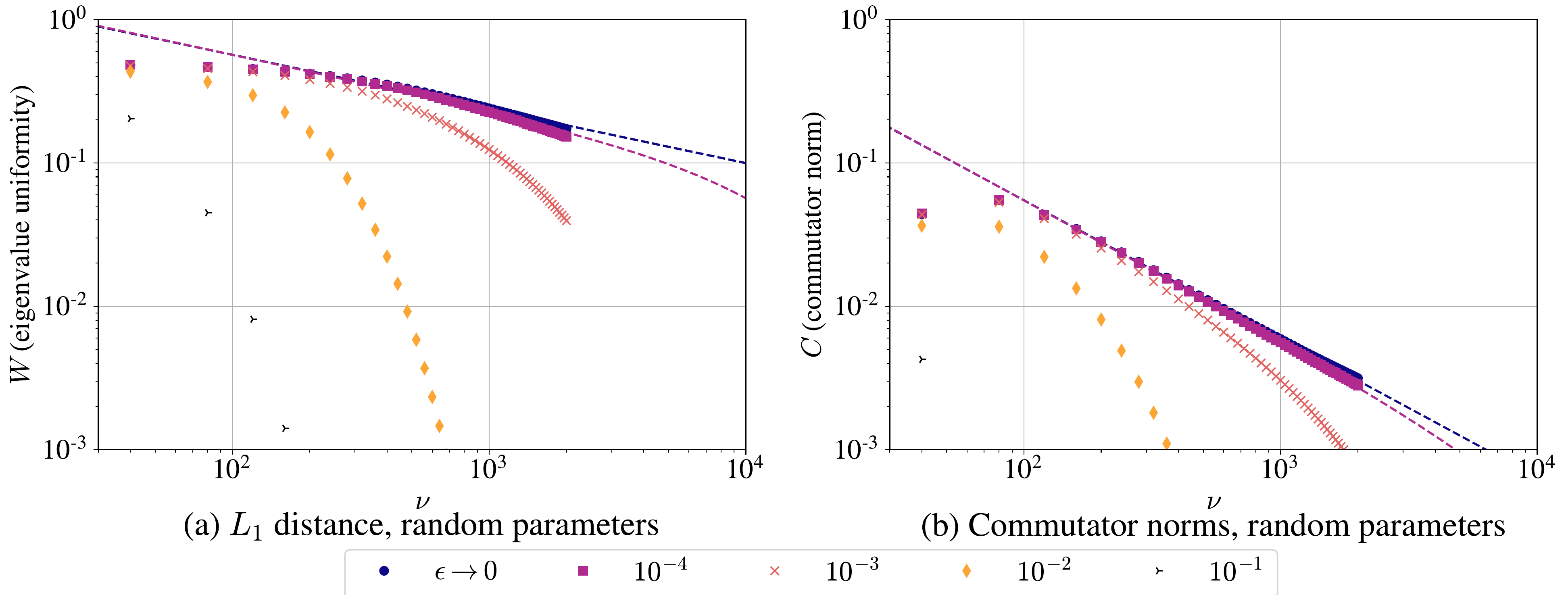}
	\caption{
		Simulating families of 10-qubit Strong entangling layer (SEL) ansatz circuits \cite{schuld2020circuit} at random gate parameters for an increasing number $\nu$ of gates and
		per-gate depolarising error rates $\epsilon$.
		(a) the uniformity measure 	$W(\nu)$ of the error eigenvalues of the density matrix from \cref{eq:prob_dist_for_plots}
		closely match the theoretical model (dashed lines) for random circuits and confirm that increasing the number of gates in random
		circuits scrambles local noise into global white noise.  
		(b) the commutator norm $C(\nu)$ from \cref{eq:commnorm} is significantly smaller in absolute value and
		decreases with a larger polynomial degree (steeper slope of the dashed lines)
		than the uniformity measure -- this suggests that the dominant eigenvector of the density
		matrix $\rho$ approximately commutes with $\rho$ even when noise is not well described by white noise.
		The $\epsilon \rightarrow 0$ simulations were approximated using $\epsilon=10^{-8}$ ($\epsilon=10^{-7}$)
		when calculating $W$ ($C$).
		\label{fig:SEL-ansatz}
	}
\end{figure*}

\subsection{White noise in random circuits~\label{sec:random_intro}}
Random circuits have enabled quantum supremacy experiments using noisy quantum computers for two primary reasons:
(a) the outputs of these circuits are hard to simulate classically  and (b)
they render local noise into global white noise~\cite{arute2019quantum}, hence
introducing only a trivial bias to the ideal probability distribution similarly as in~\cref{sec:intro_glob_dep}.

Ref~\cite{dalzell2021random} considered random circuits consisting of $s$ two-qubit gates,
each of which undergoes two single qubit (depolarising) errors each with probability $\tilde{\epsilon}$
(assuming single-qubit gates are noiseless).
We can relate this to our model by identifying the local noise
after each two-qubit gate with the error event in \cref{eq:noise_model}
via the probability  $\epsilon = 1- (1-\tilde{\epsilon})^2 = 2 \tilde{\epsilon} -  \tilde{\epsilon}^2$.
Ref~\cite{dalzell2021random} then established the fidelity $\tilde{F}$ of the quantum state
which one obtains from a noisy cross-entropy score as
\begin{equation*}
	\tilde{F} = e^{- 2 s \tilde{\epsilon} \pm O(s \tilde{\epsilon}^2)} = e^{- \xi \pm O(\epsilon \xi) }.
\end{equation*}
This coincides with our approximation from \cref{eq:fid_approx}
up to an additive error in the exponent which, however, diminishes
for low gate error rates. In the following we will thus assume $F \equiv \tilde{F}$.

Measuring these noisy states in a the standard measurement basis $\{|j \rangle\}_{j=1}^{d}$
produces a noisy probability distribution $\tilde{p}_{noisy} (j) = \langle j | \rho |j \rangle$. 
Ref.~\cite{dalzell2021random} established that this 
probability distribution rapidly approaches the white noise approximation $\tilde{p}_{wn} = F p_{id}  + (1-F) p_{unif}$.
In particular, the  total variation distance (via the $l_1$ norm $\lVert x \rVert_1 = \sum_i |x_i|$) between the two probability
distributions  is upper bounded as
\begin{equation} \label{eq:theoretical_bound}
	\tfrac{1}{2}   \lVert \tilde{p}_{noisy} -  \tilde{p}_{wn} \rVert_1 
	\leq
	O(F \epsilon \sqrt{\nu})
	=
	O(e^{-\xi} \xi/\sqrt{\nu}).
\end{equation}
This expression is formally identical to the bound
on the commutator norm in \cref{eq:commutator_norm}; Indeed if the noise
in the quantum state approaches a white noise approximation,
it implies that 
the commutator norm must also vanish in the same order.

On the other hand, the reverse is not necessarily true as \cref{eq:theoretical_bound}
is a stronger condition than \cref{eq:commutator_norm}
as the latter only guarantees that
the dominant eigenvector approaches $\ket{\psi_1} \approx \ket{\psi_{id}}$
but does not imply anything about the eigenvalue distribution of $\rho$ or $\rhoerr$.

\section{Numerical simulations\label{sec:numerics}}

\subsection{Target metrics\label{sec:description}}

In the NISQ-era comprehensive error correction will not be feasible and thus hope is primarily based on variational quantum algorithms~\cite{cerezo2021variational,endo2021hybrid,bharti2021noisy,koczor2020variational,foldager2022noise}.
In this paradigm a shallow, parametrised quantum circuit is used
to prepare a parametrised quantum state that aims to approximate the solution to a given problem, typically the ground state of a problem Hamiltonian. Due to its shallow depth the ansatz circuit is believed to be error robust and its tractable parametrisation allows to explore the Hilbert space near the solution. On the other hand, such circuits
are structurally quite different than random quantum circuits and it was already raised in ref.~\cite{dalzell2021random}
whether error bounds on the white noise approximation extend to these shallow quantum circuits.

We simulate such quantum circuits under the effect of local depolarising noise -- while note
that a broad class of local coherent and incoherent error models can effectively be transformed into
local depolarising noise using, e.g., twirling techniques or randomised compiling~\cite{PhysRevA.78.012347, PhysRevA.85.042311, cai2019constructing, cai2020mitigating}.
We analyse the resulting noisy density matrix $\rho$
by calculating the following two quantities.
First, we quantify `closeness' to a white noise state from \cref{eq:rhown}
by computing uniformity measure $W$ as the $l_1$-distance between the uniform distribution and the
non-dominant eigenvalues of the output state as
\begin{equation}\label{eq:prob_dist_for_plots}
W:=	\frac{1}{2} \lVert p_{err}  -p_{unif} \rVert_1 =  \frac{1}{2} \sum_{k=2}^d | \frac{ \lambda_k  } {1-\lambda_1} - \frac{1}{d-1}|,
\end{equation}
which only depends on spectral properties of the quantum state and can thus be computed straightforwardly.
We show in \cref{statement:uniformity} that
$W$ is proportional
to the trace distance from a white noise quantum state as
\begin{equation}
	\lVert \rho - \rhown \rVert_1
	=
	(1-\lambda_1) W
	+\mathcal{E}_w,
\end{equation}
uo to a bounded error $\mathcal{E}_w$.
The uniformity measure $W$ thus determines the bias in estimating any traceless expected value
as discussed in \cref{sec:intro_glob_dep}.

Second, we calculate the commutator norm $\mathcal{E}_C$
from \cref{sec:intro_glob_dep} relative to $1-\lambda_1$ as
\begin{equation}\label{eq:commnorm}
	C:= \frac{
		\lVert [ \rhoid, \rho ] \rVert_1 }
	{ 1-\lambda_1 }
	=
	\lVert [ \rhoid, \rhoerr ] \rVert_1 + \mathcal{O}(\mathcal{E}_q),
\end{equation}
which we relate to the commutator norm between the ``error part'' of the state $\rho_{err}$ 
and the ideal quantum state $\rhoid$ in \cref{lemma:commnorm}.
In the following, we will refer to $C$ as the commutator norm
-- and recall that it determines the ultimate performance of purification-based 
error mitigation as discussed in \cref{sec:intro_glob_dep}.

\begin{figure*}[tb]
	\centering
	\includegraphics[width=0.7\textwidth]{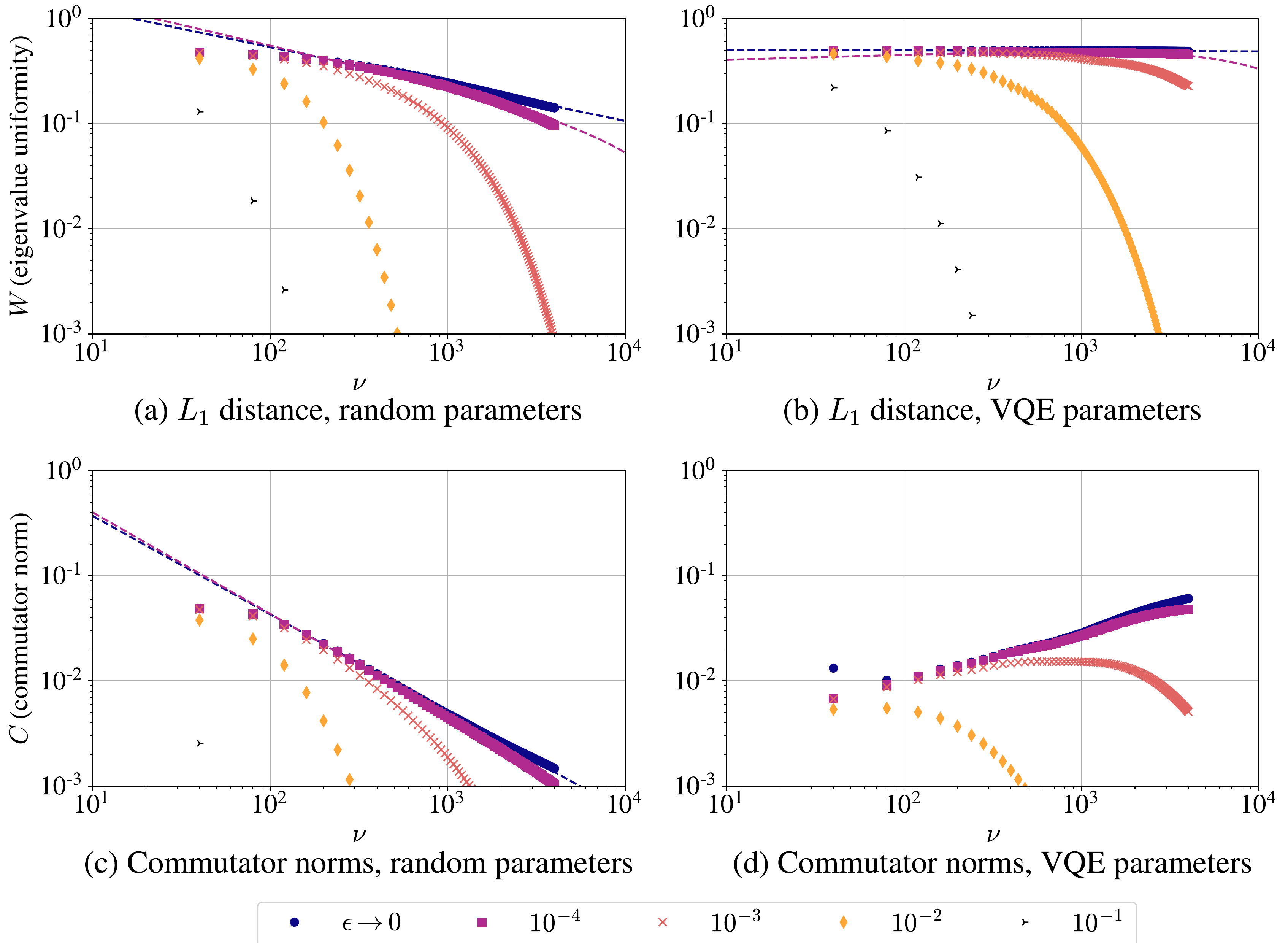}
	\caption{ \textbf{XXX Hamiltonian:}
		same simulations as in \cref{fig:SEL-ansatz} but using 10-qubit HVA quantum circuits constructed for the XXX spin problem Hamiltonian.
		(a, c) at randomly chosen circuit parameters of the HVA we find the same conclusions as for random circuits in \cref{fig:SEL-ansatz}.
		(b) when the HVA circuit approximates the ground state of the Hamiltonian (VQE parameters) we find the
		noise in the circuit is not approximated well by white noise, i.e., the uniformity measure  $W(\nu)$ is large and does not decrease as we increase $\nu$.
		(d) On the other hand, the commutator norm $C(\nu)$ is significantly smaller than $W(\nu)$ confirming that
		the the ideal quantum state approximately commutes with the noisy one.		
		The $\epsilon \rightarrow 0 $ simulations were approximated using $\epsilon=10^{-8}$ ($\epsilon=10^{-7}$)
		when calculating $W$ ($C$).
		\label{fig:xxx}
	}
\end{figure*}

\subsection{Random states via Strong Entangling ans{\"a}tze~\label{sec:var_circ_random_states}}
We first consider a Strong Entangling ansatz (SEA):
it is built of alternating layers
of parametrised single-qubit rotations followed by a series of nearest-neighbour
CNOT gates as illustrated in~\cref{fig:ansatz} -- and assume a local depolarising noise with probability $\epsilon$.
We simulate random quantum circuits
by randomly generating parameters $|\theta_k| \leq 2\pi$
-- note that these circuits are not necessarily Haar-random distributed and thus results in \cref{sec:random_intro} do not necessarily apply.

We simulate $10$-qubit circuits and in \cref{fig:SEL-ansatz} (a) we
plot the eigenvalue uniformity $W(\nu)$ while in 
\cref{fig:SEL-ansatz} (b) we plot the commutator norm $C(\nu)$
for an increasing number $\nu$ of quantum gates -- all datapoints are averages over ten random seeds.
These results surprisingly well recover the expected 
behaviour of random quantum circuits 
as for small error rates $\epsilon \rightarrow 0$ both quantities $W(\nu)$ and  $C(\nu)$
can be approximated by the function from \cref{eq:nu_alpha} as we now discuss.

In \cref{sec:random_intro} we stated bounds 
of ref.~\cite{dalzell2021random}
on the distance between $\tilde{p}_{noisy}$ and $\tilde{p}_{wn}$.
Based on the assumption that these bounds also apply to the probability
distributions $p_{noisy} =  \bra{\psi_k} \rho \ket{\psi_k} $
and $p_{wn} := \bra{\psi_k} \rhown \ket{\psi_k}$
we derive in \cref{statement:upperbound}
the approximate bound on the eigenvalue uniformity as
\begin{equation*}
	W = O\left( \frac{ e^{-\xi} \xi/\sqrt{\nu}  } { 1 - e^{-\xi}}  \right).
\end{equation*}
Furthermore, by combining \cref{eq:commnorm} and the bound in \cref{eq:commutator_norm}
we find that the commutator norm $C$ is similarly bounded by the same function.
On the other, \cref{fig:SEL-ansatz} (b) suggests that the commutator norm
decreases with a larger polynomial degree and thus
we approximate both $W(\nu)$ and $C(\nu)$ using the function
\begin{equation}\label{eq:fitfunction}
f(\nu) =	\alpha \frac{ \xi e^{-\xi}  } {\nu^\beta ( 1 - e^{-\xi})}
=
 \alpha/\nu^\beta + \mathcal{O}(\xi)
\end{equation}
where we fit the two parameters $\alpha$ and $\beta$ to our simulated dataset.
The second equation above is an expansion for small circuit error rates $\xi$ as detailed in \cref{app:expand}.
Indeed, in \cref{fig:SEL-ansatz} (blue circles) for small $\epsilon \rightarrow 0$ we observe a nearly linear function in the
log-log plot in \cref{fig:SEL-ansatz} and thus remarkably well recover the theoretical bounds
with the polynomial power approaching $b \rightarrow 1/2$.

\begin{figure*}[tb]
	\centering
	\includegraphics[width=0.7\textwidth]{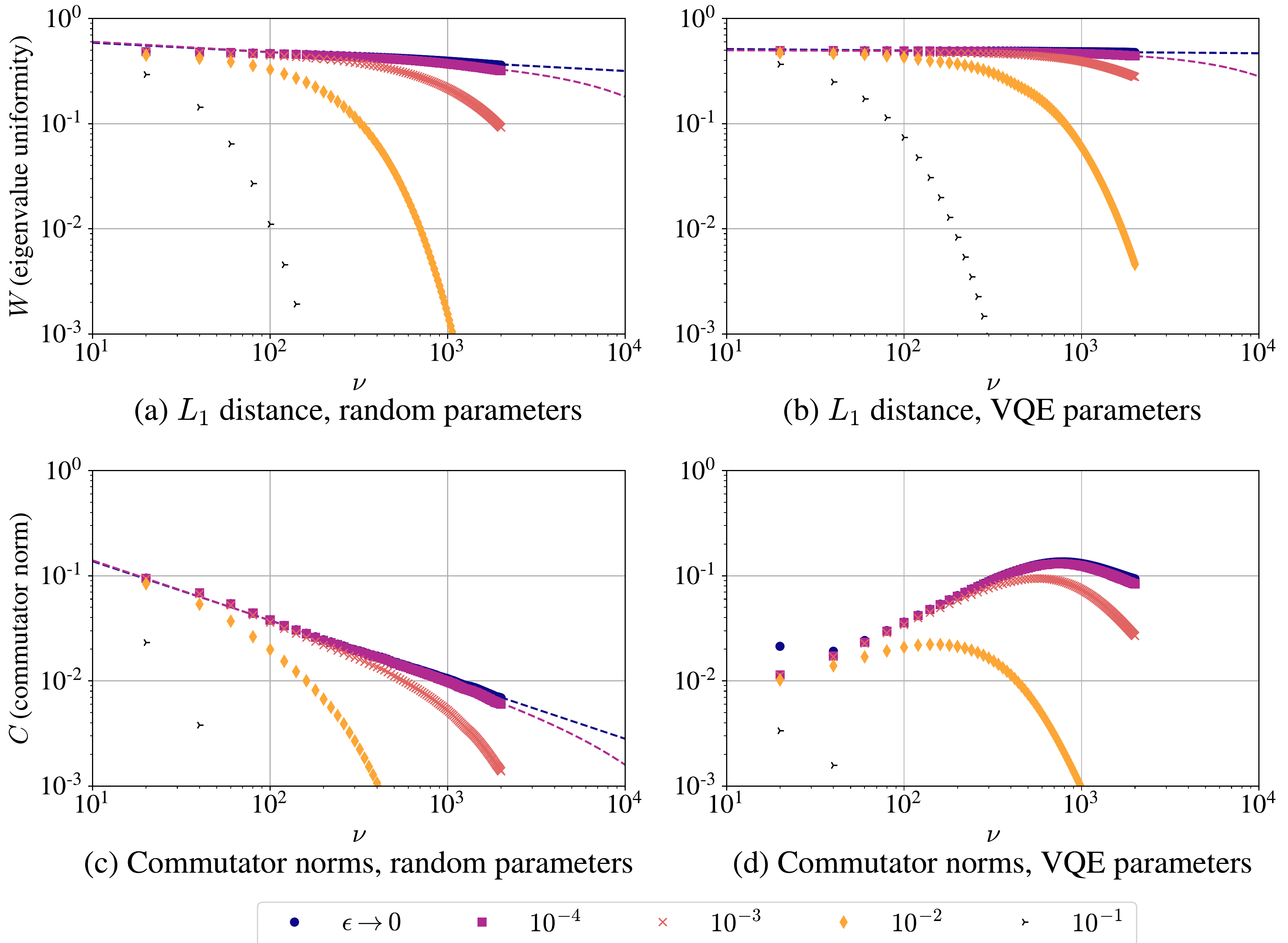}
	\caption{ \textbf{TFI}
		same simulations as in \cref{fig:SEL-ansatz} but using 10-qubit HVA quantum circuits constructed for the TFI spin problem Hamiltonian.
		(a, c) at randomly chosen circuit parameters $W(\nu)$ decreases more slowly, in smaller polynomial order than random circuits -- see text
		and see simulations with added layers of $R_z$ gates in \cref{fig:tfi-hva-scrambled}.
		(b) at the VQE parameters white noise is again not a good approximation, i.e., the uniformity measure  $W(\nu)$ is large and does not decrease as we increase $\nu$.
		(d) the commutator norm $C(\nu)$ is smaller than $W(\nu)$ in absolute value by an order of magnitude.
		The $\epsilon \rightarrow 0 $ simulations were approximated using $\epsilon=10^{-8}$ ($\epsilon=10^{-7}$)
		when calculating $W$ ($C$).
		\label{fig:tfi}
	}
\end{figure*}

Furthermore, comparing \cref{fig:SEL-ansatz} (b, blue circles) and \cref{fig:SEL-ansatz} (a, blue circles)
suggest that the commutator norm has both a significantly smaller absolute value (smaller $\alpha$)
and decreases at a faster polynomial rate (larger beta)  than the uniformity measure.
In fact, the commutator norm is more than two orders of magnitude smaller than the 
uniformity measure which suggests that even when $\rhoerr$ is not approximated well
by a white noise state it, nevertheless, almost commutes with the ideal pure state $\rhoid$. 

We finally	consider how the absolute factor $\alpha$ depends on the number of qubits:
we perform simulations at a small error rate $\epsilon \rightarrow 0$
and fit our model function $\alpha \nu^\beta$ to extract $\alpha(N)$ for an increasing number of qubits.
The results are plotted in \cref{fig:alpha-n} (e) and suggest that the prefactor
$\alpha(N)$ initially grows slowly but then saturates while note that a polylogarithmic depth is sufficient to reach 
anticoncentration~\cite{dalzell2021random}.

\subsection{Variational Hamiltonian Ansatz\label{sec:hva}}
Theoretical results guarantee that the SEL ansatz initialised at random parameters approach
for an increasing depth
unitary 2-designs thereby reproducing properties of random quantum circuits~\cite{mcclean2018barren,cerezo2021cost}.
It is thus not surprising
that the model introduced in \cref{sec:random_intro}
gives a remarkably good agreement between the SEL ansatz (dots on in \cref{fig:SEL-ansatz}) and
genuine random circuits (fits as continuous lines in \cref{fig:SEL-ansatz}).

Here we consider the Hamiltonian Variational Ansatz (HVA)~\cite{wecker2015progress,wiersema2020exploring}
at more practical parameter settings:
The HVA  has the advantage that we can efficiently obtain parameters that increasingly better (as we increase the ansatz depth)
approximate the ground state of a problem Hamiltonian -- we will refer to these as VQE parameters.
We also want to compare this circuit against random circuits and thus also simulate the HVA
such that every gate receives a random parameter as detailed in \cref{app:HVA}.

While the VQE parameter settings capture the relevant behaviour in practice as one approaches a solution,
the random parameters are more relevant, e.g., at the early stages of a VQE parameter optimisation.
Furthermore, as the circuit is entirely composed by Pauli terms in the problem Hamiltonian,
the dimensionality of its dynamical Lie algebra is entirely determined by the problem Hamiltonian in contrast to an
exponentially growing algebra of the SEL ansatz~\cite{larocca2022diagnosing}.

\subsection{Heisenberg XXX spin model \label{sec:var_circ_XXX}}
We first consider a VQE problem of finding the ground state of the 1-dimensional
XXX spin-chain model.
We construct the HVA ansatz from \cref{sec:hva} for this problem Hamiltonian as a sum
$\mathcal{H}_{\text{XXX}} = \mathcal{H}_0 + \mathcal{H}_1$
 as
\begin{equation*}
	\mathcal{H}_0 = \sum_{k=1}^N  \Delta_k Z_k,
	\,
	\mathcal{H}_1 = \sum_{k=1}^N [X_k X_{k+1} {+} Y_k Y_{k+1} {+} Z_k Z_{k+1} ].
\end{equation*}
The Pauli operators $XX$, $YY$ and $ZZ$ determine couplings between nearest neighbour spins in a 1-D
chain and we choose them to be of unit strength. Furthermore, $Z_k$ are local
on-site interactions $|\Delta_k| \leq 1$ that were generate uniformly randomly such that the
Hamiltonian has a non-trivial ground state.

First, we simulate the HVA ansatz for $N=10$ qubits with randomly generated circuit parameters
as $|\theta_k| \leq 2 \pi$ and plot results for an increasing number of quantum gates
in \cref{fig:xxx} (a, c).
We a find similar behaviour for the eigenvalue uniformity $W(\nu)$  as with random SEL circuits in
\cref{fig:SEL-ansatz} (a) 
and obtain a reasonably good fit for $\epsilon \rightarrow 0$ using our model function from \cref{eq:fitfunction}.
The commutator norm in \cref{fig:xxx} (c) is again significantly smaller in magnitude than
the uniformity measure and decreases faster with a higher polynomial order similarly to as with the random
SEL ansatz in \cref{fig:SEL-ansatz} (b) .

Second, in \cref{fig:xxx} (b,d) we simulate the ansatz at the VQE parameters that approximate the ground state.
Since the ansatz parameters become very small as one approaches an adiabatic evolution, it
is not surprising that the output density matrix is not well-approximated by a white noise state:
the uniformity measure is very large in \cref{fig:xxx} (b). 
The commutator norm in \cref{fig:xxx} (d) again, is significantly smaller
than $W(\nu)$ and although it appears to slowly grow with $\nu$,
it appears to decrease for $\nu \rightarrow \infty$. This agrees with observations of ref.~\cite{koczor2021dominant}
that the circuits need not be random for the commutator to be sufficiently small in practice.

Furthermore, in \cref{fig:alpha-n} (a, b) we investigate the dependence on $N$ and find that the prefactor $\alpha$ grows slowly and appears to saturate
for $N\geq 10$.

\begin{figure*}[tb]
	\centering
	\includegraphics[width=0.7\textwidth]{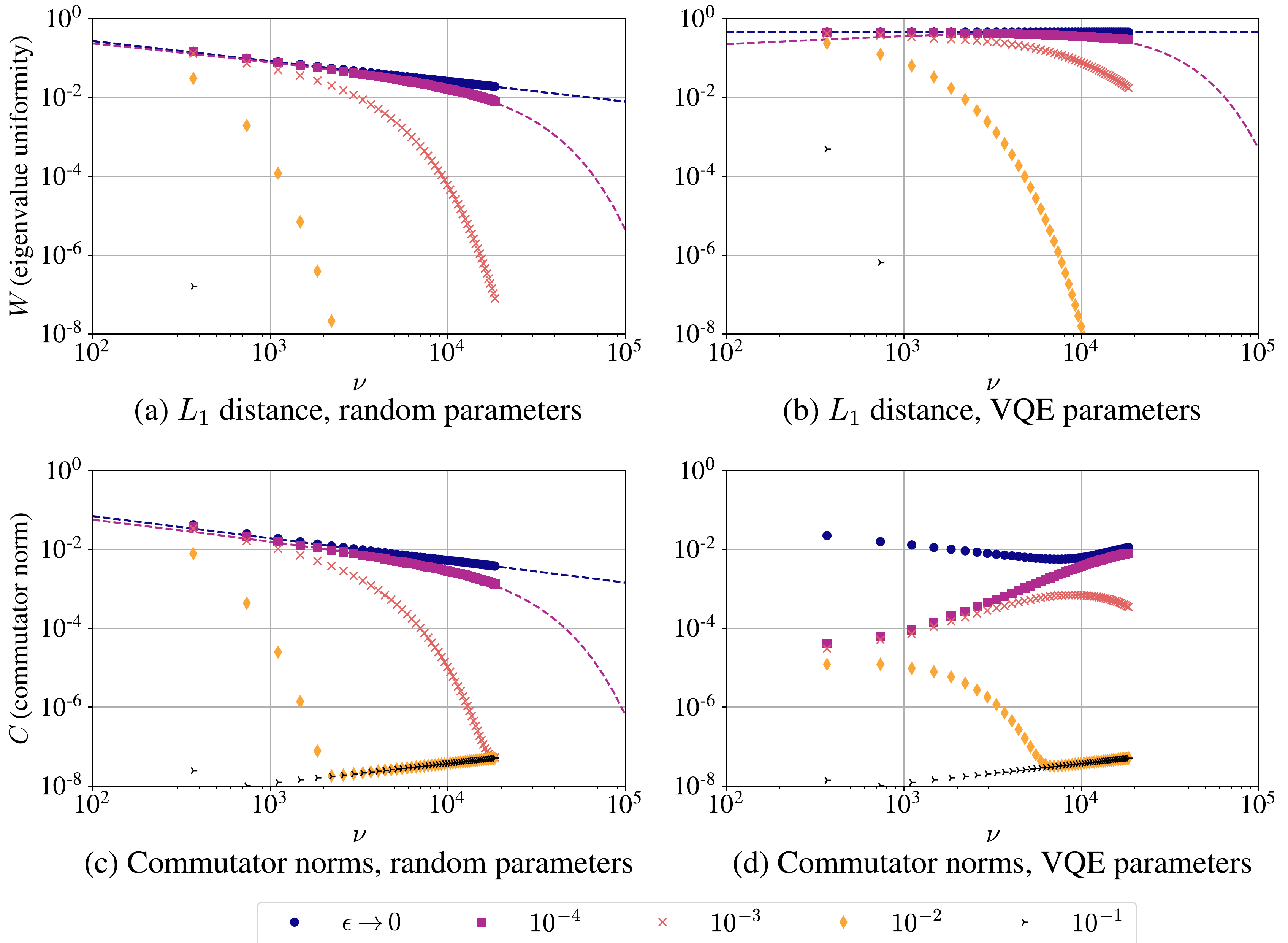}
	\caption{\textbf{LiH} same simulations as in \cref{fig:SEL-ansatz} but using 6-qubit HVA quantum circuits constructed for a LiH molecular Hamiltonian.
		(a, c) at randomly chosen circuit parameters both $W(\nu)$ and $C(\nu)$  decrease as expected for random circuits due our
		randomised compiling strategy~\cite{Campbell_2019,Ouyang2020compilation}.
		(b) at the VQE parameters white noise is an increasingly bad approximation, i.e., the uniformity measure  $W(\nu)$ increases as we increase $\nu$.
		(d) the commutator norm $C(\nu)$ is smaller than $W(\nu)$ in absolute value by 2 orders of magnitude.
		The $\epsilon \rightarrow 0 $ simulations were approximated using $\epsilon=10^{-8}$ ($\epsilon=10^{-7}$)
		when calculating $W$ ($C$).
		\label{fig:lih}
	}
\end{figure*}

\subsection{TFI\label{sec:tfi}}
In the next example we consider the transverse-field Ising (TFI) model $\mathcal{H}_{\text{TFI}} = \mathcal{H}_0 + \mathcal{H}_1$ using
constant on-site interactions $h_i=1$ and 
randomly generated coupling strengths $|J_i| \leq 1$ as 
\begin{equation}
	\mathcal{H}_0 = - \sum_i h_i X_i,
	\quad
	\mathcal{H}_1 = - \sum_i J_{i} Z_i Z_{i+1} \label{eq:tfi}.
\end{equation}
We first simulate the HVA ansatz with random variational
parameters in \cref{fig:tfi} (a, c).
While at small error rates $\epsilon \rightarrow 0$ \cref{fig:tfi} (a, blue) can be fitted well with our
polynomial approximation form \cref{eq:fitfunction},
we observe that the eigenvalue uniformity $W(\nu)$ in \cref{fig:tfi} (a, blue) decreases with a small
polynomial degree.

Indeed, as the HVA ansatz is specific to a particular
Hamiltonian, its dynamical Lie algebra may have a low dimensionality~\cite{larocca2022diagnosing}
resulting in a limited ability to scramble local noise into white noise; this explains why 
in \cref{fig:tfi} (a) the uniformity measure decreases more slowly, i.e., in a smaller polynomial order, 
than random circuits.
For this reason, we additionally simulate in \cref{fig:tfi-hva-scrambled} the TFI-HVA ansatz but with adding 
$R_z$ gates in each layer whose generator is not contained in the problem Hamiltonian.
The increased dimensionality of the dynamic Lie algebra, indeed, improves
scrambling as the white noise approximation is clearly better in \cref{fig:tfi-hva-scrambled}
--
while note that the increased dimensionality may
also lead to exponential inefficiencies in training the circuit~\cite{larocca2022diagnosing}.

In stark contrast to the case of the uniformity measure $W(\nu)$, we find that the commutator norm in
\cref{fig:tfi} (c, blue) decreases substantially for an increasing $\nu$ despite the low dimensionality of the
Lie algebra. This nicely demonstrates that a small commutator norm
is a much more relaxed condition than white noise 
as the latter requires that the noise is fully scrambled in the entirety of the exponentially
large Hilbert space.
Finally, we simulate the TFI circuits at VQE parameters and find qualitatively the same behaviour as in the case of
the XXX problem.

\subsection{Quantum Chemistry: LiH \label{sec:var_circ_lih}}
We consider a 6-qubit Lithium Hydride (LiH) Hamiltonian in the Jordan-Wigner encoding
as a linear combination of non-local Pauli strings 
$P_k \in \{\mathrm{Id}, X, Y, Z \}^{\otimes N}$ as
\begin{equation}\label{eq:lih}
	\mathcal{H}_{LiH} = \sum_{k=1}^{r_h} h_k P_k.
\end{equation}
We construct the HVA ansatz by splitting this Hamiltonian into two parts with $\mathcal{H}_0$ being 
composed of the diagonal Pauli terms in \cref{eq:lih}
while $\mathcal{H}_1$ composed of non-diagonal Pauli strings.

Such chemical Hamiltonians typically have a very large number of terms with $r_h \gg 1$
but a significant fraction only have small weights $h_k$
thus the HVA would have a large number of gates with only very small rotation angles.
For these reasons we construct a more efficient circuit whose basic building blocks are constructed using sparse compilation techniques~\cite{Campbell_2019,Ouyang2020compilation}:
Each single layer in the HVA ansatz consists of gates that correspond to $100$ randomly selected terms of the Hamiltonian
with sampling probabilities $p_k \propto |h_k|$ proportional to the Pauli coefficients.
This approach has the added benefit that it makes the circuit structure random as opposed to the fixed structures in \cref{sec:var_circ_XXX} and in \cref{sec:tfi}.

Results shown in \cref{fig:lih} (a,c) agree with our findings from the previous sections:
at randomly chosen circuit parameters the uniformity measure decreases according to \cref{eq:fitfunction};
the commutator norm similarly decreases but in a higher polynomial order while its absolute value is smaller by at least an order of magnitude.
In contrast, \cref{fig:lih} (b) suggests that the errors are not well approximated by white noise with a large and non-decreasing $W(\nu) \approx 0.5$.
Furthermore, \cref{fig:lih} (b) again confirms that despite white noise is not a good approximation,
the commutator norm is small in absolute value, i.e., $\approx 10^{-3}$ in the practically relevant region. This guarantees a very good
performance of the ESD/VD error mitigation techniques sufficient for nearly all practical purposes.

\section{Discussion} \label{sec:discussion}

Random quantum circuits---instrumental for demonstrating quantum advantage---are known to scramble local gate noise into global white noise
for sufficiently long circuit depths~\cite{arute2019quantum}: general bounds have been proved
on the approximation error which decrease as $\nu^{-1/2}$ as we increase the number $\nu$
of gates in the random circuit~\cite{dalzell2021random}.

In this work we consider shallow-depth, variational quantum circuits that are typical in practical applications of near-term quantum computers
and answer the question: can variational quantum circuits scramble local gate noise into global depolarising noise? 
While the answer to this question is relevant for the fundamental understanding of noise processes in near-term quantum devices,  
it has significant implications in practice: the degree to which local noise is scrambled into white noise determines
the performance of a broad class of error mitigation techniques that are of key importance to achieving value with near-term devices~\cite{cai2022quantum}.
As such, we derive two simple metrics that bound performance guarantees: first, the uniformity
measure $W$ characterises the performance of error mitigation techniques that assume global depolarising (white) noise~\cite{PhysRevE.104.035309}; second,
the norm $C$ of the commutator between the ideal and noisy quantum states determines the performance of purification-based error mitigation techniques~\cite{PhysRevX.11.031057,PhysRevX.11.041036} via bounds of ref.~\cite{koczor2021dominant}.  

We perform a comprehensive set of numerical experiments to simulate typical applications of near-term quantum computers and analyse
characteristics of noise based on the aforementioned two metrics.
In all experiments in which we \emph{randomly initialise parameters of the variational circuits} we semiquantitatively find the same conclusions.
First, both metrics, the eigenvalue uniformity $W$ and the commutator norm $C$ are well described by our polynomial 
approximation from \cref{eq:fitfunction} for small gate error rates. Second, this confirms that, similarly to genuine random circuits,
local errors get scrambled into global white noise with a polynomially decreasing approximation error as we increase the number of gates.
Third, the commutator $C$ decreases at a higher polynomial rate and has a significantly, by 1-2 orders of magnitude, smaller absolute value in the practically relevant region than the eigenvalue uniformity $W$. This confirms that purification based techniques are expected to have a superior 
performance compared to error mitigation techniques that, e.g., assume a global depolarising noise.

We then investigate the practically more relevant case when the ansatz circuits are initialised near the
ground state of a problem Hamiltonian; in all cases we semiquantitatively find the same conclusions.
First, the errors do not get scrambled into white noise and the approximation errors are large thus effectively prohibiting
or at least significantly limiting
the use of error mitigation techniques that assume global depolarising noise.
Second, the commutator norm is quite small in absolute value, i.e., $\approx 10^{-2} -10^{-4}$ in the practically relevant region;
Since the ansatz circuit prepares 
the ground state, the square of the commutator norm determines the performance of ESD/VD
thus for all applications we simulated we expect a very good performance of the ESD/VD approach.
Third, we identify strategies to improve scrambling of local
noise into global white noise as we increase circuit depth: We find that inserting additional gates
to a HVA that is otherwise not contained in the problem Hamiltonian increases the dimensionality of
the dynamic Lie algebra and thus leads to a reduction of both metrics.
We find that applying randomised compiling to these non-random, practical circuits also reduces both metrics.

While purification-based techniques~\cite{PhysRevX.11.031057,PhysRevX.11.041036} have been shown to perform well on specific
examples, the present systematic analysis of circuit noise puts these results into perspective and demonstrates the following:
First, the superior performance of the ESD/VD technique is not necessarily due to randomness in the quantum circuits -- albeit, 
in deep and random circuits its performance is further improved. Second, while some error mitigation techniques perform well
on quantum circuits well-described by white noise~\cite{PhysRevE.104.035309,endo2018practical,PRXQuantum.2.040330},
we identify various practical scenarios where a limited performance is expected.

The present work advances our understanding of the nature of noise in near-term quantum computers and helps making
progress towards achieving value with noisy quantum machines in practical applications.
As such, results of the present work will be instrumental for identifying design principles that lead to robust, error-tolerant
quantum circuits in practical applications.

\vspace{2mm}
\noindent\textbf{Data availability}\\[0mm]
Numerical simulation code is openly available in the repository:
\href{https://github.com/jfold/shallow-circuits-noise}{github.com/jfold/shallow-circuit-noise}.

\section*{Acknowledgements} \label{sec:acknowledgements}
The authors thank Simon Benjamin for his support.
The authors thank Richard Meister for instructions on pyQuEST and Arthur Rattew for
multiple discussions on quantum simulations. All simulations in this work were performed using the simulation tools
QuEST~\cite{Jones2019} and its Python interface pyQuEST~\cite{pyquest}.
B.K. thanks the University of Oxford for
a Glasstone Research Fellowship and Lady Margaret Hall, Oxford for a Research Fellowship.
B.K. derived analytical results and contributed to writing the manuscript. 
J.F. was supported by the William Demant Foundation [grant number 18-4438]. J.F. performed numerical simulations and contributed to writing the manuscript.

\appendix

\begin{figure}[b]
	\begin{quantikz}
		\qw & \gate{R_z} & \gate{R_y} & \gate{R_z} & \ctrl{1} & \qw & \targ{}  & \qw \\
		\qw & \gate{R_z} & \gate{R_y} & \gate{R_z} & \targ{} & \ctrl{1} & \qw  & \qw \\
		\qw & \gate{R_z} & \gate{R_y} & \gate{R_z} & \qw & \targ{}& \ctrl{-2} & \qw \\
	\end{quantikz}
	\caption{A single layer of the Strong Entangling Layers ansatz for three qubits:
		it first applies single-qubit gates $Ry$, $Rz$ and $Ry$
		on all qubits which is then followed by nearest neighbour CNOT gates.
		\label{fig:ansatz}}
\end{figure}
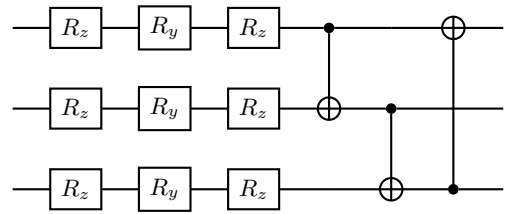
\section{Derivation of \cref{eq:eigval_appr}} \label{app:proof_eigval_appr}
Recall that any quantum state can be transformed into
a non-negative arrowhead matrix following Statement~1 from \cite{koczor2021dominant}
as $\tilde{\rho} = F |\tilde{\psi}_{id}\rangle\langle \tilde{\psi}_{id}| + D+C$ with
\begin{equation}
	\tilde{\rho}  = \begin{pmatrix}
		F & C_2 & C_3 & \dots & C_{d}\\
		C_2 & D_2 &  &  &  \\
		C_3 &  & D_3 \\
		\vdots & & & \ddots &\vdots \\
		C_{d} &  &  & \dots & D_{d}
	\end{pmatrix}.
\end{equation}
We obtain the above matrix by applying a suitable unitary transformation $\tilde{\rho} := U \rho U^\dagger$
such that $|\tilde{\psi}_{id}\rangle := U |\psi_{id}\rangle = (1,0, \dots 0)$ 
while $F,C_k,D_k \geq 0$ with $k\in\{2, 3, \dots,d \}$ with $d$ denoting the dimension,
and all other matrix entries are zero. 
Given the above arrowhead representation of a quantum state, one can analytically compute
eigenvalues of the density matrix as roots of the following secular equation~\cite{koczor2021dominant,o1990computing}
\begin{equation} \label{eq_eigenvalue} 
	P(x) = x - F + \sum_{k=2}^d \frac{C_k^2 }{(D_k - x) } =0.
\end{equation}
With this we compute the deviation between dominant eigenvalue $\lambda_1$ and the fidelity as
\begin{align} \nonumber
	\lambda_1 - F =&  \sum_{k=2}^d \frac{C_k^2 }{(\lambda_1 - D_k) }
	\leq
	\max_k (\lambda_1 - D_k)^{-1} \sum_{k=2}^d C_k^2\\
	\leq&  \lVert [\rho_{id}, \rho] \rVert^2 (2\lambda_1 -1)^{-1},
\end{align}
where we have used that $D_k \leq \lambda_1$ and that all summands are non-negative as $D_k, C_k, \lambda_1 \geq 0$,
and in the second inequality we have used the series of matrix norms
$\sum_{k=2} C_k^2 =  \lVert C \rVert_{HS}^2 /2 = \lVert [\rho_{id}, \rho] \rVert^2_\infty$
as established in \cite{koczor2021dominant}.
We have also introduced the abbreviation $\lVert [\rho_{id}, \rho] \rVert$
given all $p$-norms of the matrix $[\rho_{id}, \rho]$
are equivalent up to a constant factor. In particular,
any $p$-norm of the commutator can be computed as $\lVert [\rho_{id}, \rho] \rVert_p = 2^{1/p} \sqrt{\mathrm{Var}[\rho]}$
where we used the quantum mechanical variance $\mathrm{Var}[\rho] := \expval{\rho^2}{\psi_{id}} - F^2$
as established in~\cite{koczor2021dominant}.
Furthermore, in the second inequality in~\cref{eq_eigenvalue}
we have used that $\max_k (\lambda_1 - D_k)^{-1} = (\lambda_1 - D_2)^{-1} \leq (\lambda_1 - \lambda_2)^{-1} \leq (2\lambda_1 -1)^{-1} $
by substituting the general inequality $\lambda_2 \leq (1-\lambda_1)$ due to the fact that $\tr[\rho]=1$.

By denoting the commutator norm as $\mathcal{E}_C$, we can thus finally conclude that
$\lambda_1 - F \in O(\mathcal{E}_C)$
as stated in \cref{eq:eigval_appr}.

\subsection{Trace distance from white noise states}
In this section we evaluate analytically the trace distance of any quantum state $\rho$ from
the corresponding white noise state in \cref{eq:rhown} in terms of a distance between probability distributions.

\begin{statement}\label{statement:wn}
	We can approximate the white noise-state in \cref{eq:rhown} in terms of the dominant eigengvalue $\lambda_1$ and
	the dominant eigenvector $\ket{\psi_1}$ of the quantum state as
	\begin{equation}\label{eq:wn_approx_eigv}
		\rhown = \lambda \ketbra{\psi_1} + (1-\lambda_1) \mathrm{Id}/d + \mathcal{E}_w,
	\end{equation}
	up to an approximation error $\mathcal{E}_w$ that is bounded via \cref{eq:error_wn}.
\end{statement}
\begin{proof}
	We start by approximating the weight $\eta$ in \cref{eq:rhown} as
	$\eta \approx F \approx \lambda_1$ via \cref{eq:fid_approx}
	as well as we approximate the dominant eigenvalue using \cref{eq:eigval_appr} 
	and then collect the approximation errors as
	\begin{equation*}
		\rhown = \lambda \ketbra{\psi_{id}} + (1-\lambda_1) \mathrm{Id}/d + \mathcal{E}_F + \mathcal{E}_C + O(\epsilon^2 /\nu).
	\end{equation*}
	We now use results in \cite{koczor2021dominant} for bounding the distance between
	the ideal and noisy quantum states as
	\begin{align*}
		\lVert \ketbra{\psi_{id}} - \ketbra{\psi_1} \rVert_1 &=
		\sqrt{1- \langle \psi_{id} | \psi_1 \rangle }\\
		&= 1 -  O\left( \frac{ \mathcal{E}_C }{\lambda_1 - \lambda_2} \right),
	\end{align*}
	where $\mathcal{E}_C$ is the commutator norm from \cref{eq:eigval_appr}.
	We thus establish the approximation 
	\begin{equation}
		\rhown = \lambda \ketbra{\psi_1} + (1-\lambda_1) \mathrm{Id}/d + \mathcal{E}_w,
	\end{equation}
	where we collect all approximation errors as
	\begin{equation}\label{eq:error_wn}
		|\mathcal{E}_w| \leq |\mathcal{E}_F| + O(\epsilon^2 /\nu) 
		+O\left[ \mathcal{E}_C (1 + \frac{ 1 }{1 - \lambda_2/\lambda_1} )\right].
	\end{equation}

\end{proof}

\begin{statement}\label{statement:uniformity}
		We define the eigenvalue uniformity as $W:=\tfrac{1}{2} \lVert p_{err}  -p_{unif} \rVert_1$
	via the non-dominant eigenvalues of the density matrix 
	$p_{err}:= (\lambda_2, \lambda_3, \dots, \lambda_d)/(1-\lambda_1)$.
	This metric is related to the trace distance from a white noise state (as in \cref{eq:trace_dist_bound}) as
	\begin{equation}
		\lVert \rho - \rhown \rVert_1
		=
		(1-\lambda_1) W
		+\mathcal{E}_w,
	\end{equation}
	where the approximation error $\mathcal{E}_w$ is stated in \cref{statement:wn}.
\end{statement}
\begin{proof}
	We substitute the approximation of $\rhown$ from \cref{eq:wn_approx_eigv} including the
	error term $\mathcal{E}_w$ and then we
	use the spectral decomposition of $\rho$ to obtain the trace distance as
	\begin{align}\nonumber
		\lVert \rho - \rhown \rVert_1 =& \lVert \sum_{k=2}^{d} \lambda_k \ketbra{\psi_k}   - (1-\lambda_1) \mathrm{Id}/d \rVert_1 + \mathcal{E}_w 
		\\
		=& \frac{1}{2} \sum_{k=2}^{d} |\lambda_k   - \frac{1-\lambda_1}{d} | + \mathcal{E}_w\\
		=& \frac{1-\lambda_1}{2} \lVert p_{err}  -p_{unif} \rVert_1 + \mathcal{E}_w.
	\end{align}
In the second equation we analytically evaluated the trace distance
and thus in the third equation we rewrite the result in terms of $p_{err}$ which is our ``error probability'' distribution as 
	$p_{err} := (\lambda_2, \lambda_3, \dots, \lambda_d)/(1-\lambda_1)$.
\end{proof}

\begin{statement}
Alternatively to \cref{statement:uniformity},
if a quantum state admits the decomposition in \cref{eq:decomposition} 
then we can state the trace distance without approximation as
	\begin{equation}
		\lVert \rho - \rhown \rVert_1 = \frac{(1-\eta)}{2} \lVert p_\mu - p_{unif} \rVert_1.
	\end{equation}
This is directly analogous to the uniformity measure of the non-dominant eigenvalues of $\rho$ in \cref{statement:uniformity}, however, this
expression quantifies the uniformity of the probability distribution  $p_\mu$ which are eigenvalues of the error density matrix $\rhoerr$.
\end{statement}

Let us assume the decomposition in \cref{eq:decomposition}.
We find the following result via a direct calculation as
\begin{align*}
	\lVert \rho - \rhown \rVert_1 &=   (1-\eta) \lVert \rhoerr - \mathrm{Id}/d   \lVert_1\\
	&    =  (1-\eta) \lVert \sum_{k=1}^d \mu_k \ketbra{\phi_k} -  \mathrm{Id}/d \lVert_1\\
	& 	 =  \frac{(1-\eta)}{2} \lVert \sum_{k=1}^d |\mu_k  - 1/d|\\
	& = \frac{(1-\eta)}{2} \lVert p_\mu - p_{unif} \rVert_1
\end{align*}
where we have used the spectral resolution of the error density matrix and
then analytically evaluated the trace distance.
Given $\rhoerr$ is a positive-semidefinite matrix with unit trace, its eigenvalues $\mu_k$
form a probability distribution that we denote as $p_\mu$.

\subsection{Upper bounding the uniformity measure \label{app:err_prob_measurements}}
In this section we upper bound the uniformity measure based on the number of gates and error rates in a quantum circuit.

\begin{statement}\label{statement:upperbound}
	We adopt the bounds of \cite{dalzell2021random} in \cref{eq:theoretical_bound} for the distance between probability
	distributions measured in the standard basis
	$\tfrac{1}{2}   \lVert \tilde{p}_{noisy} -  \tilde{p}_{wn} \rVert_1 $ 
	and assume the same bounds approximately apply to any measurement basis.
	Then, it follows that the uniformity measure from \cref{statement:uniformity} is approximately bounded  by the same bounds as
	\begin{equation*}
	W
	=
	O( \frac{ e^{-\xi} \xi/\sqrt{\nu}  } { 1 - e^{-\xi}}  ) + O( \frac{\mathcal{E}_w }{1-\lambda_1} ),
	\end{equation*}
	where the approximation error $\mathcal{E}_w$ is stated in \cref{statement:wn}.
\end{statement}
\begin{proof}
Let us consider measurements performed in the basis as the eigenvectors
of the density matrix which yield probabilities as the eigenvalues as
$$p_{noisy} = \bra{\psi_k} \rho \ket{\psi_k} = (\lambda_1, \lambda_2 \dots, \lambda_d). $$
Measuring the white noise state in the same basis yields the following
approximation of the probabilities using the error term from 
\cref{eq:wn_approx_eigv} as
\begin{align*}
	p_{wn} &:= \bra{\psi_k} \rhown \ket{\psi_k}\\
	&= (\lambda_1, \frac{1-\lambda_1}{d} \dots, \frac{1-\lambda_1}{d}) + \mathcal{E}_w.
\end{align*}

\begin{figure}[tb]
	\centering
	\includegraphics[width=\columnwidth]{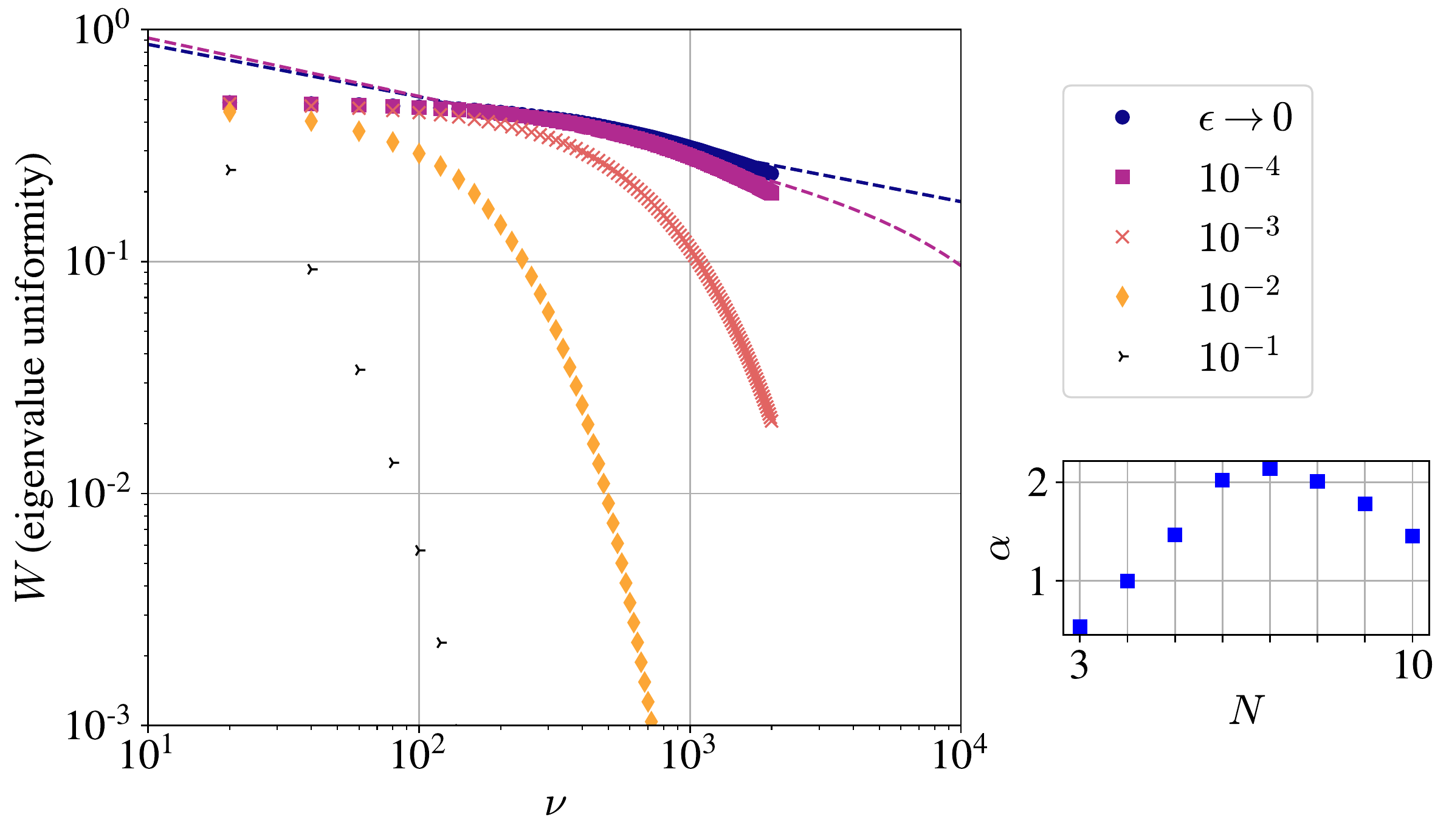}
	\caption{(left) TFI-HVA ansatz: same simulations as in \cref{fig:tfi} (a) but with added parametrised $R_z$ gates after each layer.
		The additional gates increase the dimensionality of the dynamic Lie algebra which leads to a faster 
		scrambling of local gate noise into white noise, e.g.,  the $\epsilon \rightarrow 0$ curve is steeper
		than in \cref{fig:tfi} (a).	See \cref{app:numerics} for more details.
		(right) the dependence on the number of qubits shows a very similar trend as without the $Rz$ gates, i.e., 
		compare to \cref{fig:alpha-n} (c).
	}
	\label{fig:tfi-hva-scrambled}
\end{figure}

The distance of the above two measurement probability distributions is then
\begin{equation*}
	\frac{1}{2} \lVert   p_{noisy} - p_{wn}  \rVert_1
	= 
	(1-\lambda_1)  W + \mathcal{E}_w,
\end{equation*}
where $W = \tfrac{1}{2}\lVert p_{err}  -p_{unif} \rVert_1$
is our  eigenvalue uniformity from \cref{statement:uniformity}.
Under the assumption that the upper bound on the measurement probabilities $\tfrac{1}{2}   \lVert \tilde{p}_{noisy} -  \tilde{p}_{wn} \rVert_1 $ 
from \cref{eq:theoretical_bound} approximately holds for any measurement basis we can bound the eigenvalue uniformity as
\begin{align*} 
		W
		&= \frac{1}{2(1-\lambda_1)} \lVert   p_{noisy} - p_{wn}  \rVert_1 + \frac{ \mathcal{E}_w }{1-\lambda_1}\\
		&\leq
		O\left(\frac{F}{1-\lambda_1} \epsilon \sqrt{\nu}\right) + \frac{ \mathcal{E}_w }{1-\lambda_1}\\
		&=
		O\left( \frac{ e^{-\xi} \xi/\sqrt{\nu}  } { 1 - e^{-\xi}}  \right) + O\left( \frac{\mathcal{E}_w }{1-\lambda_1} \right).
	\end{align*}
In the last equation we introduced the approximation of $F$ from \cref{eq:fid_approx}
as well as the approximate dominant eigenvalue from \cref{eq:eigval_appr}.
\end{proof}

\subsubsection{Expanding the upper bound\label{app:expand}}
We now expand the upper bound from \cref{statement:upperbound}
for small $\xi$ as.
More specifically, we consider the parametrised fit function from  \cref{eq:fitfunction} 
and substitute the Taylor expansion $e^{-\xi}  = 1 - \xi + \xi^2 + \dots$ as
\begin{align*}
	\alpha \frac{e^{-\xi}  \xi/\sqrt{\nu}^\beta }{1-e^{-\xi}}
	&=
	\alpha \frac{e^{-\xi}}{\nu^\beta} 
	\frac{\xi }{
		\xi - \xi^2/2 + \dots
	}\\
	&=
	\alpha \frac{e^{-\xi}}{\nu^\beta} 
	\frac{1 }{
		1 - \xi/2 + \dots
	}\\
&=
\alpha \frac{1}{\nu^\beta} 
\frac{ 1 - \xi +  \dots }{
	1 - \xi/2  + \dots
}\\
	&=   \frac{\alpha}{\nu^\beta  } + O(\xi).
\end{align*}

\begin{figure*}[tb]
	\centering
	\includegraphics[width=0.7\textwidth]{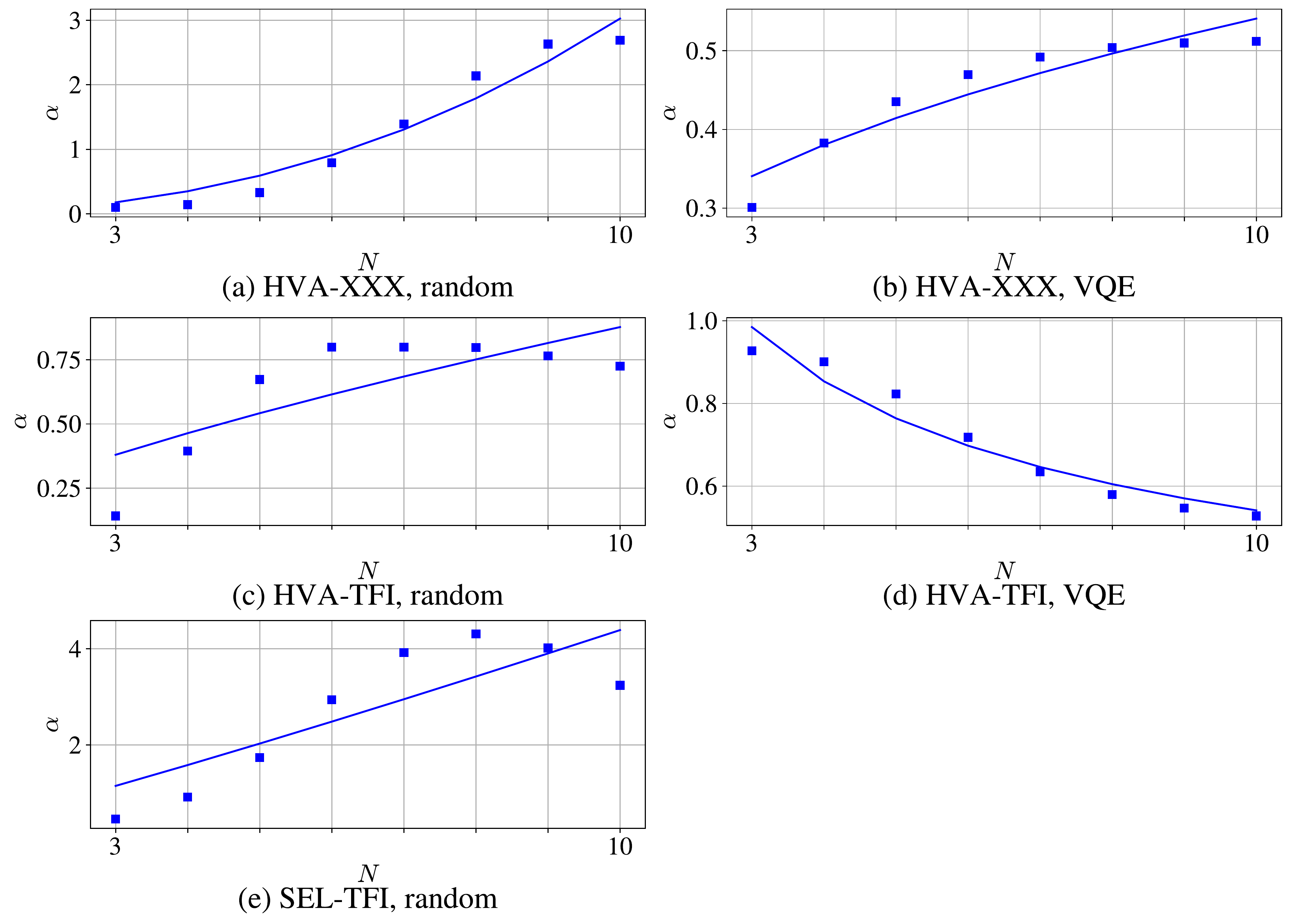}
	\caption{
		Fit parameters $\alpha$ from \cref{eq:fitfunction} for an increasing number of qubits: The circuits in \cref{fig:SEL-ansatz,fig:xxx,fig:tfi,fig:lih} at $\epsilon \rightarrow 0$
		were simulated for an increasing number of qubits and the curve from \cref{eq:fitfunction} was fitted. 
		\label{fig:alpha-n}
	}
\end{figure*}

\subsection{Commutator norm}
\begin{lemma}\label{lemma:commnorm}
	The commutators norms are approximately related as
	\begin{equation}
		\frac{ \lVert [ \rhoid, \rho ] \rVert_1  } {1-\lambda_1}
		= 
		\lVert [ \rhoid,  \rhoerr ] \rVert_1+ \mathcal{E}_q,
	\end{equation}
up to the approximation error $ \mathcal{E}_q$.
\end{lemma}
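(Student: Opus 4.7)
The plan is to exploit the convex decomposition $\rho = \eta \rho_{id} + (1-\eta) \rho_{err}$ from \cref{eq:decomposition} to first rewrite the commutator of interest exactly, and then convert the normalising factor from $(1-\eta)$ to $(1-\lambda_1)$ at the cost of a controlled error term $\mathcal{E}_q$.

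First I would observe that $[\rho_{id}, \rho_{id}] = 0$, so plugging the decomposition directly into the commutator gives
\begin{equation*}
[\rho_{id}, \rho] = \eta [\rho_{id}, \rho_{id}] + (1-\eta)[\rho_{id}, \rho_{err}] = (1-\eta)[\rho_{id}, \rho_{err}].
\end{equation*}
Taking the trace norm (which is absolutely homogeneous) yields the exact identity
\begin{equation*}
\lVert [\rho_{id}, \rho] \rVert_1 = (1-\eta)\, \lVert [\rho_{id}, \rho_{err}] \rVert_1.
\end{equation*}

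Next I would divide both sides by $(1-\lambda_1)$ and write the ratio as
\begin{equation*}
\frac{1-\eta}{1-\lambda_1} = 1 - \frac{\lambda_1 - \eta}{1-\lambda_1},
\end{equation*}
so that
\begin{equation*}
\frac{\lVert [\rho_{id}, \rho] \rVert_1}{1-\lambda_1} = \lVert [\rho_{id}, \rho_{err}] \rVert_1 + \mathcal{E}_q,
\end{equation*}
where the approximation error has the explicit form $\mathcal{E}_q = -\frac{\lambda_1 - \eta}{1-\lambda_1}\,\lVert [\rho_{id}, \rho_{err}] \rVert_1$.

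Finally I would bound $\mathcal{E}_q$ using the two approximations already available in the excerpt. By \cref{eq:fid_approx} one has $F = \eta + \mathcal{E}_F + O(\epsilon^2/\nu)$ and by \cref{eq:eigval_appr} one has $\lambda_1 = F + O(\mathcal{E}_C)$, so
\begin{equation*}
|\lambda_1 - \eta| \leq |\mathcal{E}_F| + O(\mathcal{E}_C) + O(\epsilon^2/\nu),
\end{equation*}
which, together with the trivial bound $\lVert [\rho_{id}, \rho_{err}] \rVert_1 \leq 2$, gives $|\mathcal{E}_q| \leq (|\mathcal{E}_F| + O(\mathcal{E}_C) + O(\epsilon^2/\nu))\cdot 2/(1-\lambda_1)$. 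The only mild obstacle is that $(1-\lambda_1)^{-1}$ could in principle be large, but in the NISQ regime of interest the noisy state has $\lambda_1$ bounded away from $1$, so this prefactor is benign; the error $\mathcal{E}_q$ inherits the same smallness as the approximations of $F$ by $\eta$ and of $\lambda_1$ by $F$ that were already established earlier in the paper.
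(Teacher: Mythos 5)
Your proposal is correct and follows essentially the same route as the paper's proof: the exact factorisation $\lVert [\rhoid,\rho]\rVert_1 = (1-\eta)\,\lVert[\rhoid,\rhoerr]\rVert_1$ obtained from \cref{eq:decomposition}, followed by trading the factor $1-\eta$ for $1-\lambda_1$ using \cref{eq:fid_approx,eq:eigval_appr}. The only differences are a harmless sign slip (one has $\tfrac{1-\eta}{1-\lambda_1} = 1 + \tfrac{\lambda_1-\eta}{1-\lambda_1}$, not minus) and that the paper, instead of the trivial bound $\lVert[\rhoid,\rhoerr]\rVert_1\leq 2$, uses $\lVert[\rhoid,\rhoerr]\rVert_1 = \lVert[\rhoid,\rho]\rVert_1/(1-\eta) \propto \mathcal{E}_C$, which makes its error term second order, $|\mathcal{E}_q| \leq [\mathcal{O}(\mathcal{E}_F\mathcal{E}_C)+\mathcal{O}(\mathcal{E}_C^2)]/(1-\lambda_1)$, whereas your bound is only first order in $\mathcal{E}_F$ and $\mathcal{E}_C$.
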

\begin{proof}
	Using the decomposition from \cref{eq:decomposition} we obtain
	\begin{align*}
		\lVert [ \rhoid, \rho ] \rVert_1  &= 
		\lVert [ \rhoid, \eta \rhoid  ] + [ \rhoid, (1-\eta) \rhoerr ] \rVert_1 \\
		& =
		(1-\eta) \lVert [ \rhoid,  \rhoerr ] \rVert_1 
	\end{align*}
We can approximate
	$\eta = \lambda_1 + \mathcal{O}(\mathcal{E}_{F}) + \mathcal{O}(\mathcal{E}_{C})$ 
	via \cref{eq:fid_approx} and \cref{eq:eigval_appr} and obtain that
	\begin{equation}
		\frac{ \lVert [ \rhoid, \rho ] \rVert_1 } {1-\lambda_1}
		= 
		\lVert [ \rhoid,  \rhoerr ]\rVert_1 + \mathcal{E}_q.
	\end{equation}
The error term can be obtained via the triangle inequality $ |\mathcal{E}_q| \leq  [\mathcal{O}(\mathcal{E}_{F}\mathcal{E}_{C}) + \mathcal{O}(\mathcal{E}_{C}^2)] /(1-\lambda_1)$.
\end{proof}

\section{Further details of numerical simulations \label{app:numerics}}

\subsection{The SEL and HVA ans{\"a}tze\label{app:HVA}}

The circuit structure of the SEL ansatz used in \cref{fig:SEL-ansatz} is illustrated in \cref{fig:ansatz}:
it consists of alternating layers of parametrised single-qubit rotations and a ladder of nearest-neighbour
CNOT gates.

Let us now define the HVA ansatz.
In particular, recall that the HVA ansatz is a discretisation of the adiabatic evolution
\begin{equation*}
	U(\underline{\beta}, \underline{\gamma})
	=
	\prod_{k=1}^\nu
	e^{-i \gamma_k \mathcal{H}_1}
	e^{-i \beta_k \mathcal{H}_0},
\end{equation*}
which is applied to the initial state as the ground state of the trivial Hamiltonian $\mathcal{H}_0$.

The individual evolutions are then trotterised such that a piece of time evolution $e^{-i \gamma_k \mathcal{H}_1}$ is broken up into
products of evolution operators under the individual 
Hamiltonian terms as
\begin{equation*}
	e^{-i \gamma_k \mathcal{H}_1} \rightarrow
	\prod_{l=1}^{r_h} 
	e^{-i \gamma_k h_l P_l}.
\end{equation*}
Above we utilised the decomposition of the non-trivial part of the Hamiltonian $\mathcal{H}_1 = \sum_{l=1}^{r_h} h_l P_l $
into  Pauli strings $P_l \in \{\mathrm{Id}, X, Y, Z \}^{\otimes N}$.

We set the circuit parameter as $\gamma_k = k/\nu$ and $\beta_k = 1 - k/\nu$, such that
the circuit approximates a discretised adiabatic evolution between $\mathcal{H}_0$ and $\mathcal{H}_1$ --
and we will refer to these as VQE parameters.

In the case of random parametrisation of the HVA ansatz, every gate implementing the evolution under a single Pauli string
$e^{-i \gamma_k h_l P_l}$ is assigned a random parameter as $e^{-i \theta_{q} P_l}$
with $|\theta_{q}|  \leq 2\pi$.

\subsection{Inserting additional gates to the TFI ansatz}

In \cref{fig:tfi-hva-scrambled} we repeated the same simulation as in \cref{fig:tfi} (a), i.e., 
using a HVA ansatz for the TFI spin model at random circuit parameters, but we appended to each layer a series of parametrised $Rz$ gates
on each qubit. This guarantees that the dynamic Lie algebra generated by the Pauli terms of the TFI problem in \cref{eq:tfi}
is expanded by the inclusion of Pauli Z operators. Increasing the circuit depth of the HVA ansatz thus leads to
a faster increase of the dimensionality of the Lie algebra which demonstrably leads to a faster scrambling
of local noise into global white noise, e.g., steeper slope of the $\epsilon \rightarrow 0$ fit in \cref{fig:tfi-hva-scrambled}
than in \cref{fig:tfi}.

\subsection{Scaling with the number of qubits}

In \cref{fig:alpha-n} we simulate the same circuits as in \cref{fig:SEL-ansatz,fig:xxx,fig:tfi,fig:lih} at error
rates $\epsilon \rightarrow 0$ and plot the fit parameter $\alpha$---which is the prefactor in \cref{eq:fitfunction}---for
an increasing number of qubits. The results appear to confirm an asymptotically non-increasing trend confirming
theoretical expectations of \cite{dalzell2021random} for random circuits whereby $\alpha$ is constant bounded
in terms of the number of qubits.


%

\end{document}